\title{Hartree Corrections in a Mean-field Limit for Fermions with Coulomb Interaction}
\author{S\"oren Petrat\footnote{Department of Physics, Jadwin Hall, Princeton University, Washington Road, Princeton, NJ 08544, USA. E-mail: {\tt spetrat@princeton.edu}}}
\date{March 30, 2017}
\theoremstyle{plain}\newtheorem{theorem}{Theorem}[section]
\theoremstyle{plain}\newtheorem{lemma}[theorem]{Lemma}
\newcommand{\scp}[2]{\Big\langle #1 , #2 \Big\rangle}
\newcommand{\lilscp}[2]{\langle #1 , #2 \rangle}
\newcommand{\bigscp}[2]{\bigg\langle #1 , #2 \bigg\rangle}
\newcommand{\ketbr}[1]{| #1 \rangle \langle #1 |}
\newcommand{\norm}[2][]{\left|\left| #2 \right|\right|_{#1}}
\renewcommand{\Im}{\mathrm{Im}}
\newcommand{\RRR}{\mathbb{R}}
\newcommand{\ZZZ}{\mathbb{Z}}
\newcommand{\tr}{\mathrm{tr}}
\newcommand{\free}{\mathrm{free}}
\newcommand{\eff}{\mathrm{eff}}
\newcommand{\phieff}{\widetilde{\varphi}}
\newcommand{\rhoeff}{\widetilde{\rho}}
\newcommand{\qeff}{\widetilde{q}}
\newcommand{\peff}{\widetilde{p}}
\newcommand{\alphaeff}{\widetilde{\alpha}}
\newcommand{\absatz}{\vspace{0.4cm}}
\newcounter{remarks}
\begin{document}

\maketitle

\begin{abstract}
We consider the many-body dynamics of fermions with Coulomb interaction in a mean-field scaling limit where the kinetic and potential energy are of the same order for large particle numbers. In the considered limit the spatial variation of the mean-field is small. We prove two results about this scaling limit. First, due to the small variation, i.e., small forces, we show that the many-body dynamics can be approximated by the free dynamics with an appropriate phase factor with the conjectured optimal error term. Second, we show that the Hartree dynamics gives a better approximation with a smaller error term. In this sense, assuming that the error term in the first result is optimal, we derive the Hartree equations from the many-body dynamics with Coulomb interaction in a mean-field scaling limit.
\end{abstract}

\noindent\textbf{MSC class:} 35Q40, 35Q55, 81Q05, 82C10

\noindent\textbf{Keywords:} mean-field limit, Hartree equation, Hartree-Fock equation, reduced Hartree-Fock, fermions

\section{Introduction}

The quantum many-body dynamics of $N$ non-relativistic spinless particles in three dimensions is described by a wave function $\psi^t \in L^2(\RRR^{3N})$, the space of complex square-integrable functions. The time evolution is governed by the Schr\"odinger equation
\begin{equation}\label{Schroedinger_intro}
i \partial_t \psi^t = H \psi^t,
\end{equation}
where the Hamiltonian $H$ is a self-adjoint operator on a domain dense in $L^2(\RRR^{3N})$. Here, we consider units where $\hbar = 1 = 2m$. Given initial conditions $\psi^0 \in L^2(\RRR^{3N})$, by Stone's theorem, the wave function at time $t$ is given by $\psi^t = e^{-iHt} \psi^0$. The Hamiltonian $H$ is usually of the form
\begin{equation}\label{Hamiltonian}
H = \sum_{i=1}^N \Big( -\Delta_i + W(x_i) \Big) + \lambda_N \sum_{1\leq i<j \leq N} v(x_i-x_j),
\end{equation}
where $\Delta_i$ denotes the Laplacian acting on the $i$-th variable, $W:\RRR^3\to\RRR$ describes an external field, $v:\RRR^3\to\RRR$ with $v(x) = v(-x)$ is a pair interaction potential, and $\lambda_N \in \RRR$ is called a coupling constant. There are two different kinds of wave functions, those describing bosons and those describing fermions. Bosonic wave functions $\psi^{B}$ are symmetric, i.e.,
\begin{equation}
\psi^{B}(x_1,\ldots,x_N) = \psi^{B}(x_{\sigma(1)}, \ldots, x_{\sigma(N)})
\end{equation}
for any permutation $\sigma$, whereas fermionic wave functions $\psi^{F}$ are antisymmetric,
\begin{equation}
\psi^{F}(x_1,\ldots,x_N) = (-1)^{\sigma} \psi^{F}(x_{\sigma(1)}, \ldots, x_{\sigma(N)}),
\end{equation}
where $(-1)^{\sigma}$ denotes the sign of the permutation $\sigma$. For Hamiltonians with pair interaction with $v(x) = v(-x)$, the symmetry properties of the wave function are preserved by the time evolution \eqref{Schroedinger_intro}, i.e., bosons remain bosons and fermions remain fermions.

For studying a dynamical problem in quantum physics, one would like to know about properties of the wave function $\psi^t$, given initial conditions $\psi^0$. Ideally, one would like to solve the Schr\"odinger equation \eqref{Schroedinger_intro} exactly, but this is usually practically impossible for large particle number $N$, due to the pair interaction $v$ and the resulting complexity of $\psi^t$ on the very high dimensional space $L^2(\RRR^{3N})$. Note that the high dimensionality also makes numerical solutions practically impossible. Moreover, for large particle number $N$, one might be rather interested in statistical properties of $\psi^t$, e.g., certain averages that tell us about the typical behavior of particles. Thus, it is very interesting to study approximations to $\psi^t$, especially for large particle number $N$. In this article, we study one such approximation for fermions, namely Hartree theory, for certain suitable initial conditions and coupling constants. The emphasis is on Coulomb interaction $v(x) = \pm |x|^{-1}$, which is physically very relevant since it describes, e.g., the interaction of electrons ($+|x|^{-1}$) or particles with gravitational interaction ($-|x|^{-1}$).

We consider the approximation of $\psi$ by the most simple antisymmetric state, that is, an antisymmetric product state defined by
\begin{equation}
\left(\bigwedge_{j=1}^N \varphi_j\right)(x_1,\ldots,x_N) = \frac{1}{\sqrt{N!}} \sum_{\sigma \in S_N} (-1)^{\sigma} \prod_{i=1}^N \varphi_{\sigma(i)}(x_i),
\end{equation}
where $\{\varphi_j\}_{j=1,\ldots,N}$ is a family of orthonormal one-body wave functions $\in L^2(\RRR^3)$, and $S_N$ denotes the symmetric group. We often abbreviate $\bigwedge_{j=1}^N \varphi_j = \bigwedge \varphi_j$. When the initial condition $\psi^0$ is approximately given by $\bigwedge \varphi_j^0$, we would like to approximate $\psi^t$ by a wave function $\bigwedge \varphi_j^t$ with suitable $\varphi_j^t$. For expressing the degree of approximation in a macroscopic sense, it is useful to introduce reduced density matrices. The reduced one-particle density matrix $\gamma_{\psi}:L^2(\RRR^3)\to L^2(\RRR^3)$ of a wave function $\psi$ is defined by its kernel
\begin{equation}
\gamma_{\psi}(x,y) = \int dx_2\ldots dx_N \psi(x,x_2,\ldots,x_N) \overline{\psi(y,x_2,\ldots,x_N)},
\end{equation}
where $\overline{f}$ denotes the complex conjugate of $f$. Here, we have normalized $\gamma_{\psi}$ such that $\tr \gamma_{\psi}=\norm{\psi}^2=1$, where $\tr$ denotes the trace. Note that $\gamma_{\wedge\varphi_j}(x,y) = N^{-1} \sum_{j=1}^N \varphi_j(x) \overline{\varphi_j(y)}$. In the main results, we provide estimates for the distance of reduced density matrices in trace norm, i.e., for $\norm[\tr]{\gamma_{\psi^t} - \gamma_{\wedge\varphi_j^t}}$. By 
\begin{equation}
\bigscp{\psi^t}{A_1\psi^t} - \bigscp{\bigwedge_{j=1}^N\varphi_j^t}{A_1 \bigwedge_{j=1}^N\varphi_j^t} = \tr \Big[ A \big( \gamma_{\psi^t} - \gamma_{\wedge\varphi_j^t} \big) \Big] \leq \norm{A} \norm[\tr]{\gamma_{\psi^t} - \gamma_{\wedge\varphi_j^t}},
\end{equation}
where $A_1:L^2(\RRR^{3N}) \to L^2(\RRR^{3N})$ is the one-body operator $A:L^2(\RRR^3)\to L^2(\RRR^3)$ acting only on the first variable, this allows to control the expectation values of bounded one-particle observables ($\lilscp{\cdot}{\cdot}$ denotes the scalar product on $L^2$). Let us now, for simplicity of the presentation, disregard external fields $W(x)$ (the main result also holds for regular enough external fields) and consider repulsive interaction $v(x) = |x|^{-1}$, i.e., the Schr\"odinger equation is
\begin{equation}\label{Schr_lambda}
i\partial_t \psi^t = \left( -\sum_{i=1}^N \Delta_i + \lambda_N \sum_{1\leq i<j \leq N} |x_i-x_j|^{-1} \right) \psi^t.
\end{equation}
We consider initial conditions with kinetic energy $\lilscp{\psi^0}{ \sum_{j} (-\Delta_j)\psi^0}$ of order $N$. This means that the kinetic energy is an extensive quantity here. As a consequence, the wave function is supported at least on a volume of order $N$ (i.e., the density is at most of order $1$), e.g., think of a box with diameter $N^{1/3}$. Note that this is very different from a bosonic wave function, which could be supported on a volume of order one instead. If we now choose $\lambda_N = N^{-2/3}$, then also the interaction energy $\lilscp{\psi^0}{ \lambda_N \sum_{i<j} |x_i-x_j|^{-1}\psi^0}$ is of order $N$. This is due to the large volume. For example, think of a wave function with constant absolute value in a ball of radius $N^{1/3}$. Then the interaction energy is $N^{-2/3} N \int_{0}^{N^{1/3}} r^{-1} r^2 dr \propto N$. Therefore, it should be possible to use mean-field theory to approximate $\psi^t$. For fermions, this means that the one-particle wave functions $\varphi_j^t$ are solutions to the fermionic Hartree equations
\begin{equation}\label{Hartree_intro}
i \partial_t \varphi_j^t = \left( -\Delta + N^{-2/3} \big(|\cdot|^{-1}*\rho^t\big) \right) \varphi_j^t,
\end{equation}
where $\rho^t = \sum_{j=1}^N |\varphi_j^t|^2$, and $*$ denotes convolution. Here, the two-body interaction has been replaced by the mean-field one-body term $|\cdot|^{-1}*\rho^t$, which makes the equation \eqref{Hartree_intro} nonlinear. Note that we have not included the so-called exchange term which would lead to the Hartree-Fock equations, since it is subleading in the considered scaling limit. In order to show that the Hartree dynamics approximates the Schr\"odinger dynamics well, we would like to prove that for all times $t$,
\begin{equation}
\norm[\tr]{\gamma_{\psi^0} - \gamma_{\wedge\varphi_j^0}} \to 0 ~~~\Rightarrow~~~ \norm[\tr]{\gamma_{\psi^t} - \gamma_{\wedge\varphi_j^t}} \to 0
\end{equation}
for $N\to\infty$ and all $t>0$. However, the considered scaling limit is a bit subtle. We noted that for large $N$ the kinetic and interaction energies are of the same order, but this does not necessarily mean that both terms are competing in the dynamics. Indeed, the mean-field from Equation~\eqref{Hartree_intro} varies only very slowly on spatial scales of order one. In order to see this on a heuristic level, think of a bounded density $\rho_b$ that is supported on a ball of radius $N^{1/3}$. Then
\begin{equation}\label{heuristic_nabla_v_star_rho}
\nabla_x N^{-2/3} \big(|\cdot|^{-1}*\rho_b\big)(x) = N^{-2/3} \big((\nabla|\cdot|^{-1})*\rho_b\big)(x) \leq C N^{-2/3} \int_0^{N^{1/3}} r^{-2} r^2 dr \propto N^{-1/3}.
\end{equation}
In other words, the average forces in the system are only of order $N^{-1/3}$. Thus, it should be possible to approximate the time evolution of $\varphi_j^t$ to leading order by the free evolution with an appropriate phase factor to account for the large potential energy, i.e., by
\begin{equation}\label{free_with_phase_intro}
\phieff_j^t := e^{-i\Phi^t(x)} \varphi_j^{\free,t}, ~~\text{with}~~ \Phi^t(x) = N^{-2/3} \int_0^t ds\, \Big( |\cdot|^{-1}*\sum_{j=1}^N|\varphi_j^{\free,s}|^2 \Big)(x),
\end{equation}
where
\begin{equation}\label{free_intro}
i \partial_t \varphi_j^{\free,t} = -\Delta \varphi_j^{\free,t}.
\end{equation}
The free dynamics with the phase factor $\Phi^t(x)$ is the lowest order approximation to the Hartree equation \eqref{Hartree_intro}, using that the gradient of the mean-field potential is very small (similar to $e^{-i(-\Delta + V)t}\varphi \approx e^{-iVt}e^{-i(-\Delta)t}\varphi$ when $[\Delta,V]$ is small). However, we would still like to see the effect of the interaction in the approximation for $\psi^t$. Therefore, we need to take a closer look at the error terms in the convergence of the reduced density matrices, i.e., the convergence rates. For the convergence to the free dynamics with phase factor \eqref{free_with_phase_intro} and suitable initial conditions, we can at best expect
\begin{equation}\label{intro_free}
\norm[\tr]{\gamma_{\psi^t} - \gamma_{\wedge\phieff_j^t}} \leq C(t) N^{-1/3}.
\end{equation}
Heuristically, this is so because according to \eqref{heuristic_nabla_v_star_rho} each particle feels a force of order $N^{-1/3}$. This means that on average only one in $N^{1/3}$ particles feels an interaction of order one, i.e., only a fraction $N^{2/3}$ out of $N$ particles interact with order one. Thus, the average deviation from the free dynamics with phase factor should be of order $N^{2/3}/N$. Indeed, we prove this rigorously as an upper bound in Theorem~\ref{thm:main_result1}. On the other hand, for the convergence to the fermionic Hartree equations \eqref{Hartree_intro}, we can prove a better convergence rate,
\begin{equation}
\norm[\tr]{\gamma_{\psi^t} - \gamma_{\wedge\varphi_j^t}} \leq C(t) N^{-1/2},
\end{equation}
see Theorem~\ref{thm:main_result2}. Thus, if we assume that the error term in \eqref{intro_free} is optimal, then the fermionic Hartree equations indeed provide a subleading correction to the dynamics of fermions with Coulomb interaction in a mean-field scaling limit. Concerning the convergence rates for singular interactions in the considered scaling limit, this work generalizes \cite{froehlich:2011} in that it provides a convergence rate, \cite{petrat:2015} where the convergence rate $N^{-1/2}$ was derived for interactions $v(x)=|x|^{-\delta}$ with $0<\delta<3/5$, and \cite{bach:2015} where the convergence rate $N^{-1/6}$ was derived for a class of interactions including $v(x) = |x|^{-1}$. A more detailed overview of the literature is provided at the end of the introduction.

A very interesting open question in a similar direction concerns the extension of the convergence to the fermionic Hartree equations for time scales of order $N^{1/3}$. In other words, by rescaling time, one would like to approximate the dynamics of $\psi^t$ as a solution to the Schr\"odinger equation
\begin{equation}\label{Schr_long_t}
iN^{-1/3} \partial_t \psi^t = \left( -\sum_{i=1}^N \Delta_i + N^{-2/3} \sum_{1\leq i<j \leq N} |x_i-x_j|^{-1} \right) \psi^t,
\end{equation}
for times of order one. On these time scales the forces of order $N^{-1/3}$ have added up to produce a force of order one, such that the dynamics is not free anymore to leading order. Another way of looking at Equation~\eqref{Schr_long_t} follows from rescaling the spatial coordinates $x \to N^{1/3}x$. Then, we are interested in initial conditions $\psi^0$ in a volume of order one, and $\psi^t$ solves
\begin{equation}\label{Schr_sc}
iN^{-1/3} \partial_t \psi^t = \left( -N^{-2/3}\sum_{i=1}^N \Delta_i + N^{-1} \sum_{1\leq i<j \leq N} |x_i-x_j|^{-1} \right) \psi^t.
\end{equation}
In this equation, the small factor $N^{-1/3}$ in front of each derivative is in correspondence to the factor $\hbar$ in the Schr\"odinger equation with SI units. Here, it plays the role of a semiclassical parameter, i.e., for suitable initial conditions, Equation~\eqref{Schr_sc} is a semiclassical equation. Thus, it is natural to assume initial conditions that agree with the structure of the equation. It is an open question to derive the mean-field approximation starting from Equation~\eqref{Schr_sc}. (However, see \cite{porta:2016} for a recent partial result in this direction.) For a class of bounded interactions results were obtained in \cite{erdoes:2004,benedikter:2013,petrat:2015}. Note that for suitable initial conditions the mean-field limit is here coupled to a semiclassical limit. Thus, one would expect that the leading order behavior is described by the classical Vlasov equation. There are several results in this direction, see below.

Let us finish the introduction with an overview of the literature on the subject. The first results for a rigorous derivation of mean-field dynamics were obtained in the 70's by Hepp \cite{hepp:1974} for bosonic systems and by Braun and Hepp \cite{hepp:1977} and Dobrushin \cite{dobrushin:1979} for classical systems. Since then, it has been an active research topic in mathematical physics and a lot of different kinds of rigorous results followed. We refer the interested reader to Spohn's book \cite{spohn:1991} for an overview of classical scaling limits, and, e.g., \cite{benedikter:book} for a collection of results on quantum mechanical scaling limits and for further literature references. The first rigorous results for fermionic systems were obtained by Narnhofer and Sewell \cite{narnhofer:1981}. For a class of regular interactions, they consider both the Schr\"odinger equation \eqref{Schr_lambda} with $\lambda_N=N^{-2/3}$ (``microscopic time-scale'') and \eqref{Schr_sc} (``macroscopic time-scale''). For \eqref{Schr_lambda}, they show locally the closeness to the free evolution, and for \eqref{Schr_sc}, they show closeness to the Vlasov equation in an appropriate sense. The assumptions on the interaction potential were later relaxed by Spohn \cite{spohn:1981}. Many other works deal with the derivation of the Vlasov equation starting from the fermionic Hartree or Hartree-Fock equations \cite{lions:1993,markowich:1993,gasser:1998,pezzotti:2009,athanassoulis:2011,amour:2013,amour:2013_2,benedikter:2015}. A derivation of the Hartree-Fock equations starting from \eqref{Schr_sc} for a class of analytic interaction potentials was given in \cite{erdoes:2004}. There, convergence could be shown only for short times. This result was extended to all times and a larger class of bounded interactions in \cite{benedikter:2013,benedikter:2014}, see also \cite{benedikter:book}, and \cite{petrat:2015} for an alternate proof of some of the results. The result was also extend to cover mixed initial states in \cite{benedikter:2015_2}. Recently, a partial result was proved for \eqref{Schr_sc} (i.e., with Coulomb interaction) \cite{porta:2016}. There, convergence to the Hartree-Fock equations was established, but only for the one special initial wave function with constant $|\varphi_j|$ for all $j$ in a box. Note that the case of Coulomb interaction is technically very involved due to the singularity. There are also results about the derivation of the fermionic Hartree equations starting from \eqref{Schr_lambda} and with coupling constant $\lambda_N = 1/N$. Convergence for bounded interactions was shown in \cite{bardos:2003,bardos:2004,bardos:2007} and for Coulomb interaction in \cite{froehlich:2011}. Results for the limit considered in this article for interactions with a singularity $|x|^{-\delta}$ with $0<\delta<3/5$, and for Coulomb interaction with a singularity cutoff were obtained in \cite{petrat:2015}. In \cite{bach:2015}, a class of interactions including Coulomb interaction and general coupling constants are considered. For the limit we consider in this article, a convergence rate of $N^{-1/6}$ was proved. For larger initial kinetic energies and different coupling constants, it could be shown that the convergence rate is actually better than the one expected from the closeness to a free dynamics with phase factor.

\section{Main Results}

In the main theorems below, we consider the Schr\"odinger equation
\begin{equation}\label{Schr_main}
i \partial_t \psi^t = \left( -\sum_{i=1}^N \Delta_i + N^{-2/3} \sum_{1\leq i<j\leq N} v(x_i-x_j) \right) \psi^t,
\end{equation}
where $v(x) = \pm |x|^{-1}$. As in the introduction, the free equations are
\begin{equation}\label{free_main}
i \partial_t \varphi_j^{\free,t} = -\Delta \varphi_j^{\free,t},
\end{equation}
and the free evolution with phase factor is defined as
\begin{equation}\label{free_with_phase_main}
\phieff_j^t := e^{-i\Phi^t(x)} \varphi_j^{\free,t}, ~~\text{with}~~ \Phi^t(x) = N^{-2/3} \int_0^t ds\, \Big( v*\sum_{j=1}^N|\varphi_j^{\free,s}|^2 \Big)(x).
\end{equation}
Note that it follows that
\begin{equation}\label{eff_eq_main}
i\partial_t \phieff_j^t = \Big(-i\nabla + (\nabla \Phi^t)\Big)^2 \phieff_j^t + N^{-2/3} \big( v * \rho^{\free,t}\big) \phieff_j^t.
\end{equation}

\begin{theorem}\label{thm:main_result1}
Let $\psi^t$ be the solution to the Schr\"odinger equation \eqref{Schr_main} with $v(x)=|x|^{-1}$ or $v(x)=-|x|^{-1}$, with antisymmetric initial condition $\psi^0 \in L^2(\RRR^{3N})$. Let $\phieff_1^t,\ldots,\phieff_N^t$ be solutions to the free equations with phase factor \eqref{free_with_phase_main} with orthonormal initial conditions $\phieff_1^0,\ldots,\phieff_N^0 \in L^2(\RRR^3)$, such that
\begin{equation}\label{ass_Delta4}
\sum_{j=1}^N \norm{\nabla^4 \phieff_j^0}^2 \leq \widetilde{C}N,
\end{equation}
for some $\widetilde{C}>0$. Then there is $C>0$, such that
\begin{equation}\label{main_result1}
\norm[\tr]{\gamma_{\psi^t} - \gamma_{\wedge \phieff_j^t}} \leq Ce^{Ct} \left( \norm[\tr]{\gamma_{\psi^0} - \gamma_{\wedge \phieff_j^0}}^{1/2} + N^{-1/3} \right).
\end{equation}
\end{theorem}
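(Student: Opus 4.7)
My plan is to use the fermionic Pickl-type counting-method of \cite{petrat:2015,bach:2015}, applied to the effective one-body dynamics \eqref{eff_eq_main}. Define $p^t := \sum_{j=1}^N \ketbr{\phieff_j^t}$, $q^t := 1-p^t$, and the counting functional
\begin{equation}
\alpha(t) := \tr\big(q^t\gamma_{\psi^t}\big) = \scp{\psi^t}{q_1^t\psi^t},
\end{equation}
where the second equality uses antisymmetry. The standard equivalence $\alpha(t)\leq \norm[\tr]{\gamma_{\psi^t}-\gamma_{\wedge\phieff_j^t}} \leq C\sqrt{\alpha(t)}$ combined with $\alpha(0)\leq\norm[\tr]{\gamma_{\psi^0}-\gamma_{\wedge\phieff_j^0}}$ reduces the theorem to a Gr\"onwall estimate $\dot\alpha(t) \leq C\alpha(t) + CN^{-2/3}$.

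Differentiating, $\dot\alpha(t) = -i\scp{\psi^t}{[H-H^{\eff,t},q_1^t]\psi^t}$ with $H^{\eff,t} = \sum_j h_j^{\eff,t}$. Expanding $(-i\nabla+\nabla\Phi)^2 = -\Delta - i\Delta\Phi - 2i\nabla\Phi\cdot\nabla + |\nabla\Phi|^2$, the difference splits as $H - H^{\eff,t} = K^t + V^t$ with a phase-kinetic fluctuation
\begin{equation}
K^t := \sum_{j=1}^N\Big(2i\nabla\Phi^t(x_j)\cdot\nabla_j + i\Delta\Phi^t(x_j) - |\nabla\Phi^t(x_j)|^2\Big)
\end{equation}
and an interaction fluctuation
\begin{equation}
V^t := N^{-2/3}\sum_{1\leq i<j\leq N}v(x_i-x_j) - N^{-2/3}\sum_{j=1}^N (v*\rho^{\free,t})(x_j);
\end{equation}
only summands involving the first variable contribute to $[\,\cdot\,,q_1^t]$. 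The crucial input, suggested by \eqref{heuristic_nabla_v_star_rho} and the identity $\Delta|x|^{-1}=-4\pi\delta$, is the smallness $\norm[\infty]{\nabla\Phi^t}\lesssim tN^{-1/3}$ together with $\norm[\infty]{\Delta\Phi^t}, \norm[\infty]{|\nabla\Phi^t|^2}\lesssim t^2N^{-2/3}$, obtained by propagating the $H^4$ regularity \eqref{ass_Delta4} along the free flow.

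With these smallness estimates, the $\Delta\Phi$ and $|\nabla\Phi|^2$ pieces of $K^t$ directly yield $O(N^{-2/3})$, while the leading piece $2i\nabla\Phi(x_1)\cdot\nabla_1$ is handled by inserting $1=p_1+q_1$ around $\nabla_1$ and using $\norm{\nabla_1\psi^t}\leq C$ from energy conservation: this produces $CN^{-1/3}\sqrt{\alpha(t)}$, which Young's inequality converts to $C\alpha(t)+CN^{-2/3}$. For $V^t$, inserting $(p_1+q_1)(p_2+q_2)$ around $v(x_1-x_2)$ and symmetrizing by antisymmetry reduces matters to off-diagonal terms bounded by $\alpha(t)$ via the mean-field cancellation, plus diagonal ``collision'' terms bounded by $N^{-1}$ using Kato's inequality $|x|^{-1}\leq C(1-\Delta)^{1/2}$ to absorb the Coulomb singularity into the kinetic energy.

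The main obstacle is the interplay of the Coulomb singularity with the phase-modified kinetic operator: the absorption arguments for $V^t$, traditionally built around $-\Delta$, must remain valid when the relevant kinetic operator is $(-i\nabla+\nabla\Phi^t)^2$, and the errors produced by $\nabla\Phi^t$ must not degrade the $N^{-2/3}$ rate. This is ensured by the $N^{-1/3}$ smallness of $\nabla\Phi^t$ together with propagation of the regularity $\sum_j\norm{\nabla^k\phieff_j^t}^2\lesssim N$ for $k\leq 2$ via \eqref{eff_eq_main}; in the attractive case $v = -|x|^{-1}$, an additional Lieb--Thirring-type stability bound is needed to control the kinetic energy of $\psi^t$ uniformly in $N$. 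Combining all contributions yields $\dot\alpha(t)\leq C\alpha(t)+CN^{-2/3}$, and Gr\"onwall's lemma closes the argument.
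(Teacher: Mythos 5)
Your overall strategy is the same as the paper's: Pickl's $\alpha$-method for $\alphaeff(t) = \lilscp{\psi^t}{\qeff_1^t\psi^t}$, decomposition of $H - H^{\eff,t}$ into a phase-kinetic fluctuation and an interaction fluctuation, and a Gr\"onwall closure. However, there are two concrete gaps.

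First, the assertion $\norm[\infty]{\Delta\Phi^t}\lesssim t^2N^{-2/3}$ (even with the corrected power of $t$) is not established and is likely false. Since $\Delta\Phi^t(x) = -4\pi N^{-2/3}\int_0^t\rho^{\free,s}(x)\,ds$, an $L^\infty$ bound on $\Delta\Phi^t$ at this rate would require $\norm[\infty]{\rho^{\free,s}}\lesssim 1$, but the hypothesis \eqref{ass_Delta4} together with the generalized Lieb--Thirring inequality only gives $\int(\rho^{\free,s})^{11/3}\lesssim N$, which permits $\rho^{\free,s}$ to spike. The paper never uses a sup bound on $\Delta\Phi^t$: it bounds the weighted integral $\int(\Delta\Phi^t)^2\rho^{\free,t}$ via the Lieb--Thirring-based product estimate \eqref{rho123}, which is what actually closes at the right power of $N$. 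Your argument needs to be reformulated in this weighted $L^2$ fashion.

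Second, and more seriously, the treatment of the interaction fluctuation $V^t$ omits the decisive technical input of the paper, namely the propagation estimate $\norm[\infty]{v^2*\rho^{\free,t}}\leq CN^{1/3}$ (Lemma~\ref{lem:v2_free}), proved via \eqref{ass_Delta4} and the generalized Lieb--Thirring inequality \eqref{Lieb_Thirring} with $a=4$. The Pickl-type bound for the interaction term has the form $R_1(t)\leq CN^{-1/6}\sqrt{\norm[\infty]{v^2*\rhoeff^t}}(\alphaeff(t)+N^{-1})$; absorbing the Coulomb singularity via Kato/Hardy alone gives only $\norm[\infty]{v^2*\rho^{\free,t}}\leq CN$, hence a prefactor $CN^{1/3}$ in front of $\alphaeff(t)$, which does \emph{not} yield a $N$-independent Gr\"onwall constant and gives at best the $N^{-1/6}$ rate of \cite{bach:2015}. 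The improved $N^{1/3}$ bound, not Kato's inequality, is what makes the interaction term $O(\alphaeff+N^{-1})$ with a constant independent of $N$ and thus delivers the $N^{-1/3}$ rate in \eqref{main_result1}. (The same $v^2*\rho$ estimate is also what underlies your $\norm[\infty]{\nabla\Phi^t}\lesssim tN^{-1/3}$, since $|\nabla v| = v^2$; ``propagating $H^4$ along the free flow'' is not enough by itself.)

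A smaller point: to bound the cross term $2i\nabla\Phi^t\cdot\nabla_1$ you invoke $\norm{\nabla_1\psi^t}\leq C$ from conservation of the Schr\"odinger energy, which in the attractive case requires a separate stability argument (you note this). The paper sidesteps this entirely by inserting $\peff_1^t$ and using the antisymmetry inequality $\norm{A_1\psi^t}^2\leq N^{-1}\tr(A^*A)$ to pass to the \emph{orbital} kinetic energy $N^{-1}\sum_j\norm{\nabla\phieff_j^t}^2$, which is controlled by \eqref{nabla_phieff} for both signs of $v$. That is a cleaner route and avoids any dependence on the many-body energy.
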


\noindent\textbf{Remarks.}
\begin{enumerate}
\setcounter{enumi}{\theremarks}

\item\label{rem:1} The inequality \eqref{main_result1} still holds if we assume only $\sum_{j=1}^N\norm{\nabla^{3+\varepsilon} \phieff_j^0}^2 \leq \widetilde{C}N$ for some $\varepsilon>0$ instead of \eqref{ass_Delta4}.

\item\label{rem:1_addendum} Two simple examples can be kept in mind for which the assumption \eqref{ass_Delta4} holds. First, one can choose orthonormal $\phieff_1^0,\ldots,\phieff_N^0$ such that for all $j$, $\norm{\nabla^4 \phieff_j^0} \leq C$ for some $C>0$ independent of $j$ and $N$. (The orthonormality can be achieved, e.g., by choosing $\phieff_j^0$'s with non-overlapping compact support.) Second, one can choose plane waves in a box $[0,L]^3$ with $L=N^{1/3}$, i.e., $\phieff_j^0(x) = L^{-3/2} e^{2\pi i L^{-1} k_j \cdot x}$, with $k_j \in \ZZZ^3$ and $k_i\neq k_j$ for all $i \neq j$. Then $\sum_{j=1}^N \norm{\nabla^4 \phieff_j^0}^2 = \sum_{j=1}^N \left(2\pi L^{-1} k_j\right)^8$, i.e., the assumption holds if we choose all (or all but a few) $k_j$'s such that $|k_j| \leq C L$ for some $C>0$.

\item\label{rem:2} Our proof actually yields a slightly better time dependence on the right-hand side of \eqref{main_result1}. For example, one can easily show that the right-hand side is zero at $t=0$, if $\norm[\tr]{\gamma_{\psi^0} - \gamma_{\wedge \phieff_j^0}}=0$.

\item\label{rem:3} The result is also true if we include bounded external fields, e.g., a field that confines the particles to a periodic lattice. Moreover, one can show that it is true for a class of regular confining external potentials that go to infinity for $|x|\to\infty$. Including an external field $W$ means that $-\Delta_i$ is replaced by $-\Delta_i+W(x_i)$ in \eqref{Schr_main}, $-\Delta$ by $-\Delta+W$ in \eqref{free_main}, and that the term $W\phieff_j^t$ is added on the right-hand side of \eqref{eff_eq_main}.

\item\label{rem:4} As discussed after Equation~\eqref{intro_free}, we conjecture that the convergence rate $N^{-1/3}$ can in general not be improved. However, it is hard to prove a lower bound, or to give an explicit example, since there is no easy way to solve the interacting Schr\"odinger equation exactly.

\end{enumerate}
\setcounter{remarks}{\theenumi}

Let us now consider the fermionic Hartree equations
\begin{equation}\label{Hartree_main}
i \partial_t \varphi_j^t = \left( -\Delta + N^{-2/3} \big(v*\rho^t\big) \right) \varphi_j^t,
\end{equation}
where $\rho^t(x) = \sum_{j=1}^N |\varphi_j^t(x)|^2$. Note that the solution theory for this equation is well established, see, e.g., \cite{chadam:1975}.

\begin{theorem}\label{thm:main_result2}
Let $\psi^t$ be the solution to the Schr\"odinger equation \eqref{Schr_main} with $v(x)=|x|^{-1}$ or $v(x)=-|x|^{-1}$, with antisymmetric initial condition $\psi^0 \in L^2(\RRR^{3N})$. Let $\varphi_1^t,\ldots,\varphi_N^t$ be solutions to the fermionic Hartree equations \eqref{Hartree_main} with orthonormal initial conditions $\varphi_1^0,\ldots,\varphi_N^0 \in L^2(\RRR^3)$, such that
\begin{equation}\label{ass_Delta4_result2}
\sum_{j=1}^N \norm{\nabla^4 \varphi_j^0}^2 \leq \widetilde{C}N,
\end{equation}
for some $\widetilde{C}>0$. Then there is $C>0$, such that
\begin{equation}\label{main_result2}
\norm[\tr]{\gamma_{\psi^t} - \gamma_{\wedge \varphi_j^t}} \leq Ce^{e^{Ct}} \left( \norm[\tr]{\gamma_{\psi^0} - \gamma_{\wedge \varphi_j^0}}^{1/2} + N^{-1/2} \right).
\end{equation}
\end{theorem}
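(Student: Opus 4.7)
The plan is to adapt the counting method for fermions developed in \cite{petrat:2015}. Let $p^t := \sum_{j=1}^N \ketbr{\varphi_j^t}$ be the one-body orthogonal projection onto the span of the Hartree orbitals, set $q^t := \mathbf{1} - p^t$, and write $p_i^t, q_i^t$ for their lifts acting on the $i$-th factor of $L^2(\RRR^{3N})$. I consider the counting functional
\begin{equation*}
\alpha(t) := \tr\big(q^t\, \gamma_{\psi^t}\big) = \frac{1}{N}\bigscp{\psi^t}{\sum_{i=1}^N q_i^t\, \psi^t}.
\end{equation*}
A standard Hilbert--Schmidt computation yields $\norm[\tr]{\gamma_{\psi^t} - \gamma_{\wedge \varphi_j^t}} \leq C\sqrt{\alpha(t)}$, so it suffices to prove $\alpha(t) \leq e^{e^{Ct}}(\alpha(0) + N^{-1})$.

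Differentiating $\alpha$ in $t$ and using both \eqref{Schr_main} and \eqref{Hartree_main} gives
\begin{equation*}
\alpha'(t) = \frac{i}{N}\bigscp{\psi^t}{[H - H^{\eff,t},\, Q^t]\,\psi^t},
\end{equation*}
where $Q^t := \sum_k q_k^t$ and $H^{\eff,t}$ denotes the lifted Hartree Hamiltonian. Antisymmetry collapses this to a two-particle commutator, and inserting the resolutions $\mathbf{1} = p_i^t + q_i^t$ on each side of $v(x_1-x_2)$ decomposes the result into $pp$-$pp$, $pp$-$pq$, $pp$-$qq$, $pq$-$pq$, $pq$-$qq$, and $qq$-$qq$ type terms. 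The $pp$-$pp$ contribution partially-traces in particle $2$ to the mean field $(v*\rho^t)(x_1)$ and cancels exactly against the corresponding term in $H^{\eff,t}$, up to an exchange remainder of size $O(N^{-1})$; only the genuinely off-diagonal $pq$- and $qq$-type terms contribute to leading order.

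The main obstacle is the Coulomb singularity, which forbids any elementary $L^\infty$ estimate on $v$. Each surviving term is to be bounded by $C(t)\big(\alpha(t) + N^{-1}\big)$ using (i) Hardy's inequality $|x|^{-2} \leq 4(-\Delta)$ to transfer the singularity onto kinetic energy, (ii) orthonormality of $\{\varphi_j^t\}$ to represent operators such as $p_2^t v(x_1-x_2) p_2^t$ via $L^p$-type bounds on the density $\rho^t$, and (iii) a Cauchy--Schwarz split depositing $(-\Delta_j)^{1/2}$ factors onto $\psi^t$, absorbed by the kinetic bound $\bigscp{\psi^t}{\sum_j (-\Delta_j)\psi^t} \leq CN$ which is conserved along \eqref{Schr_main}. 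The higher-derivative bound $\sum_j \norm{\nabla^2 \varphi_j^t}^2 \leq CNe^{Ct}$, propagated from \eqref{ass_Delta4_result2} via standard energy estimates for \eqref{Hartree_main}, enters in the $q^t$-projected terms. Collecting everything yields $\alpha'(t) \leq K(t)\big(\alpha(t) + N^{-1}\big)$ with $K(t) \leq Ce^{Ct}$, and Gr\"onwall's lemma produces the double-exponential factor $e^{e^{Ct}}$. The decisive improvement over Theorem~\ref{thm:main_result1} is the exact mean-field cancellation in the $pp$-$pp$ term, which upgrades the $N^{-1/3}$ rate (set by the size of the mean-field gradient) to the $N^{-1/2}$ rate (set by the residual fluctuations of the two-body potential about its mean).
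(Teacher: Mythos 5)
Your outline correctly reproduces the scaffolding of Pickl's $\alpha$-method (the functional $\alpha(t)=\tr(q^t\gamma_{\psi^t})$, its time derivative as a commutator, the $pq$-insertion decomposition, the mean-field cancellation in the $pp$-$pp$ term, and the final Gr\"onwall step), and this is indeed the framework the paper uses, largely by citation to \cite{petrat:2015}. But the proposal does not actually close the argument, because the decisive ingredient is missing.

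The citation to \cite[Lemma~7.4]{petrat:2015} gives the differential inequality $\partial_t\alpha(t)\leq CN^{-1/6}\sqrt{\norm[\infty]{v^2*\rho^t}}\,(\alpha(t)+N^{-1})$. Your toolkit (i)--(iii) --- Hardy $|x|^{-2}\leq 4(-\Delta)$ plus the conserved Schr\"odinger kinetic energy and orthonormality --- only yields $\norm[\infty]{v^2*\rho^t}\leq CN$, which makes the Gr\"onwall coefficient of size $N^{1/3}$ and thus gives no useful bound (this is essentially the $N^{-1/6}$ situation of \cite{bach:2015}). The entire point of the paper is Lemma~\ref{lem:v2_hartree}, which establishes the much sharper bound $\norm[\infty]{v^2*\rho^t}\leq Ce^{Ct}N^{1/3}$; only then does the coefficient $N^{-1/6}\sqrt{N^{1/3}}$ become $O(1)$, giving $\alpha(t)\leq e^{e^{Ct}}(\alpha(0)+N^{-1})$ and hence the $N^{-1/2}$ rate (the inner exponential comes from the $e^{Ct}$ in that lemma, the outer one from Gr\"onwall). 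You do gesture at ``higher-derivative bounds $\sum_j\norm{\nabla^2\varphi_j^t}^2\leq CNe^{Ct}$ propagated via standard energy estimates for \eqref{Hartree_main},'' but this is both the wrong order of regularity (the Lieb--Thirring step needs $\nabla^4$, via \eqref{ass_Delta4_result2} and \eqref{Lieb_Thirring} with $a=4$) and, more importantly, not how the paper proceeds: propagating high Sobolev norms directly along the Coulomb--Hartree flow is exactly what the paper avoids. Instead, it observes that $|\phieff_j^t|=|\varphi_j^{\free,t}|$, so the free-with-phase density inherits the $N^{1/3}$ bound from Lemma~\ref{lem:v2_free}, and then a second Gr\"onwall argument controls $\sum_j\norm{\nabla(\varphi_j^t-\phieff_j^t)}^2\leq CN^{1/3}e^{Ct}$, which via \eqref{v2rho_splitting} transfers the bound from $\rhoeff^t$ to $\rho^t$. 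This comparison-with-free-evolution device is the genuinely new idea and is absent from your proposal.

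A smaller but real conceptual error: you attribute the improvement from $N^{-1/3}$ to $N^{-1/2}$ to the ``exact mean-field cancellation in the $pp$-$pp$ term.'' That cancellation is common to both theorems; it is baked into the $\alpha$-method. The $N^{-1/3}$ in Theorem~\ref{thm:main_result1} arises from the extra term $R_2(t)$, i.e., the difference between the free-with-phase generator and the Hartree generator (gradients of $\Phi^t$). Theorem~\ref{thm:main_result2} has no such term because the comparison dynamics is the Hartree flow itself, and the rate is then governed purely by the fluctuation term, which gives $N^{-1/2}$ once $\norm[\infty]{v^2*\rho^t}$ is under control.
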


\noindent\textbf{Remarks.}
\begin{enumerate}
\setcounter{enumi}{\theremarks}

\item The Remarks~\ref{rem:1}--\ref{rem:3} from below Theorem~\ref{thm:main_result1} also apply here.

\item The theorem also holds if we consider the Hartree-Fock equations instead of \eqref{Hartree_main}, i.e., when there is the additional exchange term $\sum_{k=1}^N \big(v*\overline{\varphi_k^t}\varphi_j^t\big)\varphi_k^t$ on the right-hand side of \eqref{Hartree_main}.

\item The double exponential on the right-hand side of \eqref{main_result2} comes from using two Gronwall estimates. It is physically very undesirable and can possibly be improved using more refined techniques.

\item If we introduce a cutoff around the singularity of the Coulomb interaction, one can improve the rate of convergence from $N^{-1/2}$ to $N^{-1/2-\delta}$ for some $\delta > 0$, depending on the cutoff. This is possible using similar techniques as in the proof of Equation~(2.21) in \cite{benedikter:2013} or as in \cite{mitrouskas:2016_2}. However, we have not been able to improve the convergence rate for Coulomb interaction without cutoff.

\end{enumerate}
\setcounter{remarks}{\theenumi}

\section{Proofs}

In the course of the proofs we use many standard inequalities without references. We refer, e.g., to \cite{liebloss:2001} for proofs.

\absatz

\noindent\textbf{Notation.} In this section, $C$ denotes some positive constant independent of $N$. Since we are usually not interested in precise bounds for constants, $C$ can be different from line to line.

\subsection{Propagation Estimates}

\begin{lemma}\label{lem:v2_free}
Let $\varphi_1^{\free,t},\ldots,\varphi_N^{\free,t}$ be solutions to the free equations \eqref{free_main} with orthonormal initial conditions $\varphi_1^{\free,0},\ldots,\varphi_N^{\free,0} \in L^2(\RRR^3)$ such that
\begin{equation}\label{ass_Delta4_lem}
\sum_{j=1}^N \norm{\nabla^4 \varphi_j^{\free,0}}^2 \leq \widetilde{C} N
\end{equation}
for some $\widetilde{C}>0$. Then there is $C>0$, such that
\begin{equation}\label{v^2_free_bound}
\norm[\infty]{v^2 * \rho^{\free,t}} \leq C N^{1/3}.
\end{equation}
\end{lemma}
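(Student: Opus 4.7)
The plan is a kernel-splitting argument. Fix $R>0$ and split $|y|^{-2}=|y|^{-2}\chi_{|y|\le R}+|y|^{-2}\chi_{|y|>R}$. The far part contributes at most $R^{-2}\norm[1]{\rho^{\free,t}}=R^{-2}N$ by mass conservation. For the near part I would apply Young's convolution inequality: for any $q>3$ (so that $q'=q/(q-1)<3/2$ and the restricted kernel lies in $L^{q'}$), a short radial integration gives $\norm[q']{|y|^{-2}\chi_{|y|\le R}}=C R^{3/q'-2}$, and hence
\begin{equation*}
\bigl((|y|^{-2}\chi_{|y|\le R})*\rho^{\free,t}\bigr)(x)\le C R^{3/q'-2}\,\norm[q]{\rho^{\free,t}}.
\end{equation*}
The whole argument then hinges on an $L^q$-bound for $\rho^{\free,t}$ with some $q>3$, uniform in $t$.

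To obtain that bound I would invoke a higher-order Lieb--Thirring inequality in its dual (kinetic-energy) form,
\begin{equation*}
\tr\bigl((-\Delta)^s\gamma\bigr)\ge K_s\int\rho(x)^{1+2s/3}\,dx,\qquad 0\le\gamma\le 1,\ \rho(x)=\gamma(x,x),\ s\ge 1,
\end{equation*}
a generalization of the classical Lieb--Thirring inequality ($s=1$) that belongs to the fractional Lieb--Thirring theory. Since the free evolution commutes with $\nabla$ and preserves orthonormality, $\sum_j\|\nabla^s\varphi_j^{\free,t}\|^2=\sum_j\|\nabla^s\varphi_j^{\free,0}\|^2$ for all $s$ and $t$; by assumption this is $\le\widetilde CN$ for $s=4$, and $\le CN$ for every $s\in[0,4]$ by interpolation on the Fourier side. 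Applying the inequality with $s=4$ to $\gamma^{\free,t}=\sum_j|\varphi_j^{\free,t}\rangle\langle\varphi_j^{\free,t}|$ then yields $\norm[11/3]{\rho^{\free,t}}\le CN^{3/11}$, uniformly in $t$.

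Plugging $q=11/3$, $q'=11/8$ into the splitting and using $\norm[11/8]{|y|^{-2}\chi_{|y|\le R}}=CR^{2/11}$, one obtains
\begin{equation*}
\norm[\infty]{v^2*\rho^{\free,t}}\le C R^{2/11}N^{3/11}+R^{-2}N,
\end{equation*}
which balances at $R=N^{1/3}$ and produces the claimed bound $\norm[\infty]{v^2*\rho^{\free,t}}\le C N^{1/3}$.

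The main obstacle is the higher-order Lieb--Thirring input itself. A more elementary route --- decomposing $\varphi_j^{\free,t}=P_{\le M}\varphi_j^{\free,t}+(1-P_{\le M})\varphi_j^{\free,t}$ in Fourier space and combining Bessel's inequality for the low-frequency part with a Bernstein/Sobolev estimate on the high part --- gives only $\norm[\infty]{\rho^{\free,t}}\lesssim N^{3/8}$, and after the same kernel splitting this recovers only $N^{7/12}$. Reaching the sharp exponent $N^{1/3}$ really requires the simultaneous use of orthonormality and the fourth-derivative control encoded in the Lieb--Thirring-type inequality, rather than through Sobolev embedding alone.
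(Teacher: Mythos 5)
Your proof is correct and is essentially the paper's own argument: both split the convolution kernel $|y|^{-2}$ at radius $R$, bound the far tail by $R^{-2}N$ via mass conservation, control the near part through $\norm[11/3]{\rho^{\free,t}}\lesssim N^{3/11}$ using the generalized (higher-order) Lieb--Thirring inequality with $s=4$ together with conservation of $\sum_j\norm{\nabla^4\varphi_j^{\free,t}}^2$ under the free flow, and optimize at $R\propto N^{1/3}$. The only cosmetic difference is that you phrase the near-part estimate via Young's convolution inequality and state Lieb--Thirring in the dual density-matrix form $\tr((-\Delta)^s\gamma)\ge K_s\int\rho^{1+2s/3}$, whereas the paper uses H\"older directly and cites the same inequality in the orbital form $\int\rho^{1+2a/3}\le C_a\sum_j\norm{\nabla^a f_j}^2$ --- these are the same input, so no genuine divergence.
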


\noindent\textbf{Remarks.}
\begin{enumerate}
\setcounter{enumi}{\theremarks}

\item Assuming $\sum_{j=1}^N \norm{\nabla \varphi_j^{\free,0}}^2 \leq \widetilde{C} N$, Hardy's inequality only gives $\norm[\infty]{v^2 * \rho^{\free,t}} \leq C N$, and one can easily construct an example of a $\rho^{\free,t}$ where this bound is sharp in $N$. Here, we would like to prove the best possible dependence in $N$ by using the regularity assumption \eqref{ass_Delta4_lem}. From a physical point of view, this assumption prevents the particles from clustering too much.

\item One can easily check that the $N$ dependence on the right-hand side of \eqref{v^2_free_bound} is optimal. For example, consider a $\rho^{\free,t}$ which is constant in a box of radius $N^{1/3}$. Then equality holds in \eqref{v^2_free_bound}, see also the calculation \eqref{heuristic_nabla_v_star_rho}.

\end{enumerate}
\setcounter{remarks}{\theenumi}

\begin{proof}
We split the area of integration into $B_R = \{ y \in \RRR^3: |y| < R \}$ and the complement $B_R^c=\RRR^3 \backslash B_R$. We then use H\"older's inequality and find
\begin{align}
\big(v^2 * \rho^{\free,t}\big)(x) &= \int_{B_R} v(y)^2 \rho^{\free,t}(x-y) dy + \int_{B_R^c} v(y)^2 \rho^{\free,t}(x-y) dy \nonumber\\
&\leq \left(\int_{B_R} \left(v(y)^2\right)^{11/8} dy \right)^{8/11} \norm[11/3]{\rho^{\free,t}} + \left( \sup_{y \in B_R^c} v(y)^2 \right) \norm[1]{\rho^{\free,t}} \nonumber\\
&\leq (16\pi)^{8/11} R^{2/11} \norm[11/3]{\rho^{\free,t}} + R^{-2} N.
\end{align}
For any $\rho = \sum_{j=1}^N |f_j|^2$ with orthonormal $f_1,\ldots,f_N\in L^2(\RRR^3)$, the generalized Lieb-Thirring inequality \cite{ghidaglia:1988} (see also the references \cite{lieb:1975,lieb:2010,rumin:2011}) reads
\begin{equation}\label{Lieb_Thirring}
\int \rho(x)^{1+2a/3} dx \leq C_a \sum_{j=1}^N \norm{\nabla^{a} f_j}^2.
\end{equation}
Applying this for $a=4$ gives
\begin{equation}
\norm[11/3]{\rho^{\free,t}}^{11/3} = \int \left( \rho^{\free,t}(x) \right)^{11/3} dx \leq C_4 \sum_{j=1}^N \norm{\nabla^4 \varphi_j^{\free,t}}^2.
\end{equation}
Since $i\partial_t\varphi_j^{\free,t} = - \Delta \varphi_j^{\free,t}$, we have that $\norm{\nabla^{a} \varphi_j^{\free,t}}^2 = \norm{\nabla^{a} \varphi_j^{\free,0}}^2$. Together with the assumption $\sum_{j=1}^N \norm{\nabla^4 \varphi_j^{\free,0}}^2 \leq \widetilde{C} N$ we thus find
\begin{equation}
\big(v^2 * \rho^{\free,t}\big)(x) \leq C R^{2/11} N^{3/11} + R^{-2} N.
\end{equation}
Optimizing $R$ leads to $R \propto N^{1/3}$ and gives the desired bound \eqref{v^2_free_bound}.
\end{proof}

\begin{lemma}\label{lem:v2_hartree}
Let $\varphi_1^t,\ldots,\varphi_N^t$ be solutions to the fermionic Hartree equations \eqref{Hartree_main} with orthonormal initial conditions $\varphi_1^0,\ldots,\varphi_N^0 \in L^2(\RRR^3)$ such that
\begin{equation}\label{ass_Delta4_lem2}
\sum_{j=1}^N \norm{\nabla^4 \varphi_j^0}^2 \leq \widetilde{C} N
\end{equation}
for some $\widetilde{C}>0$. Then there is $C>0$, such that
\begin{equation}
\norm[\infty]{v^2 * \rho^t} \leq C e^{Ct} N^{1/3}.
\end{equation}
\end{lemma}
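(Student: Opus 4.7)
The plan is to mimic the proof of Lemma \ref{lem:v2_free} verbatim: the Hölder decomposition on the singularity combined with the Lieb-Thirring inequality \eqref{Lieb_Thirring} for $a=4$ yields
\begin{equation*}
\norm[\infty]{v^2 * \rho^t} \leq C R^{2/11} \Big(\sum_{j=1}^N \norm{\nabla^4 \varphi_j^t}^2\Big)^{3/11} + R^{-2} N,
\end{equation*}
so, after optimizing over $R$, the lemma reduces to a propagation estimate of the form
\begin{equation*}
\mathcal{K}^t := \sum_{j=1}^N \norm{\nabla^4 \varphi_j^t}^2 \leq \widetilde{C} N e^{Ct}.
\end{equation*}
Unlike in the free case, where $\mathcal{K}^t$ is conserved, here the remaining task is to close a Gronwall estimate for $\mathcal{K}^t$ along the Hartree flow.

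To this end I would differentiate $\mathcal{K}^t$ in time using \eqref{Hartree_main}. Since $\nabla^4$ commutes with $-\Delta$ and $V^t := N^{-2/3}(v * \rho^t)$ is a real multiplication operator, the kinetic part and the diagonal piece of $V^t$ drop out and
\begin{equation*}
\frac{d}{dt}\mathcal{K}^t = 2 \sum_{j=1}^N \Im \lilscp{\nabla^4 \varphi_j^t}{[\nabla^4, V^t] \varphi_j^t}.
\end{equation*}
Expanding the commutator by Leibniz produces a sum of terms of the form $\lilscp{\nabla^4 \varphi_j^t}{(\partial^\beta V^t)(\partial^{4-\beta} \varphi_j^t)}$ with $1 \leq |\beta| \leq 4$. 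For $|\beta| \geq 2$ the kernel $\partial^\beta v$ is no longer locally integrable, so one cannot bound $\norm[\infty]{\partial^\beta V^t}$ directly; instead I would integrate by parts to transfer derivatives from $V^t$ onto the two $\varphi_j^t$ factors, so that at most one derivative ever falls on $v$. Each resulting term is then controlled by Cauchy-Schwarz, the Sobolev interpolation $\norm{\nabla^k \varphi_j} \leq \norm{\varphi_j}^{1-k/4}\norm{\nabla^4\varphi_j}^{k/4}$ (applied after summing over $j$), and an $L^\infty$ bound on $V^t$ or $\nabla V^t$.

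These last $L^\infty$ bounds are obtained by the same Hölder + Lieb-Thirring argument as in Lemma \ref{lem:v2_free}: the singular kernels $v$ and $\nabla v \sim |x|^{-2}$ are still locally integrable in three dimensions, the tails are bounded by $R^{-k} \norm[1]{\rho^t} = R^{-k} N$, and the near-field pieces introduce a factor $\norm[11/3]{\rho^t}$ which is in turn controlled by $\mathcal{K}^t$ via Lieb-Thirring. Taken together with the prefactor $N^{-2/3}$ in $V^t$, this gives a differential inequality of the form
\begin{equation*}
\frac{d}{dt}\mathcal{K}^t \leq C(N + \mathcal{K}^t),
\end{equation*}
with $C$ independent of $N$, and Gronwall then yields $\mathcal{K}^t \leq \widetilde{C} N e^{Ct}$, from which the claim follows.

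The main obstacle is the distribution of derivatives in the commutator: since $\partial^\beta |x|^{-1}$ is not locally integrable for $|\beta| \geq 2$, the Hölder split of Lemma \ref{lem:v2_free} cannot be applied to $\partial^\beta V^t$ in that range, and one has to integrate by parts precisely the right number of times to reduce to the tractable cases $|\beta| \in \{0, 1\}$ while still being able to absorb the resulting higher-order derivatives of $\varphi_j^t$ back into $\mathcal{K}^t$ by interpolation. A secondary point is the careful bookkeeping of the $N^{-2/3}$ prefactor everywhere, which is what keeps the Gronwall constant $C$ from blowing up with $N$.
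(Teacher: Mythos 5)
Your route is genuinely different from the paper's, and it has a concrete gap at exactly the point you flag as "the main obstacle." The integration-by-parts device does not reduce the $|\beta|\ge 2$ terms to the tractable range: starting from $\lilscp{\nabla^4\varphi_j}{(\partial^\beta V^t)(\partial^{4-\beta}\varphi_j)}$, moving one derivative off $\partial^\beta V^t$ necessarily deposits it on \emph{both} remaining factors, and in particular produces a term $\lilscp{\nabla^5\varphi_j}{(\partial^{\beta-1}V^t)(\partial^{4-\beta}\varphi_j)}$. Five derivatives on $\varphi_j$ cannot be absorbed back into $\mathcal{K}^t=\sum_j\norm{\nabla^4\varphi_j^t}^2$ by interpolation — Gagliardo-Nirenberg interpolates downward, not upward — so this step fails as stated. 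To control $\partial^2 V^t$ without IBP one would need a Riesz-transform (Calder\'on--Zygmund) splitting of $\nabla^2(v*\rho^t)$ into a local piece proportional to $\rho^t$ plus a singular integral bounded only on $L^p$, $1<p<\infty$, followed by a delicate H\"older/Lieb--Thirring bookkeeping that your sketch does not supply. Moreover, even the $|\beta|=1$ term is not linear in $\mathcal{K}^t$: since $\norm[\infty]{\nabla V^t}$ is itself bounded by a positive power of $\mathcal{K}^t$, one obtains $\dot{\mathcal{K}}^t\lesssim N^{-\theta}(\mathcal{K}^t)^{1+\eta}$ for some $\theta,\eta>0$, not $\dot{\mathcal{K}}^t\le C(N+\mathcal{K}^t)$. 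This can be salvaged with a bootstrap (run the estimate on the time interval where $\mathcal{K}^t\le 2\widetilde{C}N$ holds and show the interval is open and closed), but the resulting bound is uniform for $N$ large rather than a clean Gronwall inequality; that distinction is not acknowledged.

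The paper sidesteps all of this by never trying to propagate fourth-order regularity along the Hartree flow. Instead, it introduces the auxiliary free-plus-phase orbitals $\phieff_j^t$ (for which $\norm{\nabla^4\phieff_j^t}$-type quantities are explicitly computable, since the free flow preserves $\nabla^a$ norms), writes the pointwise comparison
\[
\big(v^2*\rho^t\big)(x)\ \le\ 8\sum_{j=1}^N\norm{\nabla(\varphi_j^t-\phieff_j^t)}^2\ +\ 2\big(v^2*\rhoeff^t\big)(x),
\]
obtained from $\rho^t=(\rho^t-\rhoeff^t)+\rhoeff^t$, Cauchy--Schwarz, and Hardy's inequality, and then closes a Gronwall estimate only at the level of $\sum_j\norm{\varphi_j^t-\phieff_j^t}^2$ and $\sum_j\norm{\nabla(\varphi_j^t-\phieff_j^t)}^2$. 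The only high-regularity input is the assumption on the initial data (propagated through the \emph{free} flow), and the only propagation estimate along the Hartree flow is for the kinetic energy, via energy conservation. In short: the paper trades your hard $H^4$-propagation problem for the much softer $H^1$-stability of Hartree against the free-plus-phase dynamics, and this substitution is where the argument actually closes.
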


\begin{proof}
For any two orthonormal families $\varphi_1,\ldots,\varphi_N \in L^2(\RRR^3)$ and  $\phieff_1,\ldots,\phieff_N \in L^2(\RRR^3)$, we define the corresponding densities $\rho(x)=\sum_{j=1}^N|\varphi_j(x)|^2$ and $\rhoeff(x)=\sum_{j=1}^N|\phieff_j(x)|^2$. In the following, we use the notation $\lilscp{\varphi}{f_{x\cdot}\chi} = (f*\overline{\varphi}\chi)(x)$. Using Cauchy-Schwarz and in the last step Hardy's inequality, we find
\begin{align}\label{v2rho_splitting}
\big(v^2 * \rho\big)(x) &=  \big(v^2 * \big(\rho - \rhoeff\big)\big)(x) + \big(v^2 * \rhoeff\big)(x) \nonumber\\
&= \sum_{j=1}^N \bigg( \scp{\varphi_j}{v_{x\cdot}^2\varphi_j} - \scp{\phieff_j}{v_{x\cdot}^2\phieff_j} \bigg) + \big(v^2 * \rhoeff\big)(x) \nonumber\\
&= \sum_{j=1}^N \bigg( \scp{(\varphi_j - \phieff_j)}{v_{x\cdot}^2(\varphi_j - \phieff_j)} + \scp{(\varphi_j - \phieff_j)}{v_{x\cdot}^2\phieff_j} + \scp{\phieff_j}{v_{x\cdot}^2(\varphi_j-\phieff_j)} \bigg) + \big(v^2 * \rhoeff\big)(x) \nonumber\\
&\leq 2 \sum_{j=1}^N \Big( v^2 * |\varphi_j-\phieff_j|^2 \Big)(x) + 2\big(v^2 * \rhoeff\big)(x) \nonumber \\
&\leq 8\sum_{j=1}^N \norm{\nabla (\varphi_j-\phieff_j)}^2 + 2 \big(v^2 * \rhoeff\big)(x).
\end{align}
We now apply \eqref{v2rho_splitting} for $\varphi_j=\varphi_j^t$, i.e., the solution to the fermionic Hartree equations \eqref{Hartree_main}, and $\phieff_j=\phieff_j^t$, i.e., the solution to the free equation with phase factor \eqref{free_with_phase_main}, and with the same initial conditions $\varphi_j^0 = \phieff_j^0$. Since $|\phieff_j^t| = |\varphi_j^{\free,t}|$ (when the initial conditions are the same), we can apply Lemma~\ref{lem:v2_free}, i.e.,
\begin{equation}
\big(v^2 * \rhoeff\big)(x) \leq CN^{1/3}.
\end{equation}
We now control the remaining term with a Gronwall estimate. We proceed in three steps. First, we collect some inequalities necessary to control the terms that appear in the time derivative. Then, in the second step, we control $\sum_{j=1}^N \norm{\varphi_j^t-\phieff_j^t}^2$. In the third step, we estimate $\partial_t \sum_{j=1}^N \norm{\nabla (\varphi_j^t-\phieff_j^t)}^2$ to obtain the desired bound.

\absatz

\noindent \textbf{Step 1:} For orthonormal families $\{ \varphi^i_j \}_{j=1,\ldots,N}$, we define the corresponding densities $\rho_i(x)=\sum_{j=1}^N |\varphi^i_j(x)|^2$ and the quantities $\rho^{\nabla}_i(x)=\sum_{j=1}^N |\nabla \varphi^i_j(x)|^2$, $i=1,2,3$. First, note that by Cauchy-Schwarz
\begin{align}
\nabla \rho_1 = \sum_{j=1}^N \left( (\overline{\nabla\varphi^1_j}) \varphi^1_j + \overline{\varphi^1_j} \nabla\varphi^1_j \right) \leq 2 \sum_{j=1}^N |\nabla\varphi^1_j| |\varphi^1_j| \leq 2 \sqrt{\rho_1} \sqrt{\rho_1^{\nabla}}.
\end{align}
It follows that
\begin{align}\label{nrho_nrho_rho}
\norm[1]{(\nabla\rho_1)(\nabla\rho_2)\rho_3} \leq 4 \norm[1]{\sqrt{\rho_1^{\nabla}} \sqrt{\rho_1} \sqrt{\rho_2^{\nabla}} \sqrt{\rho_2} \sqrt{\rho_3} \sqrt{\rho_3}} \leq 4 \sqrt{\norm[1]{\rho_1^{\nabla} \rho_2 \rho_3}} \sqrt{\norm[1]{\rho_1 \rho_2^{\nabla} \rho_3}}.
\end{align}
Using first H\"older's inequality and then Sobolev's inequality and the Lieb-Thirring inequality \eqref{Lieb_Thirring} for $a=3$, we find
\begin{align}\label{nrho_rho_rho}
\norm[1]{\rho_1^{\nabla}\rho_2\rho_3} &\leq \sum_{j=1}^N \left( \int |\varphi_j^1|^6 \right)^{1/3} \left( \int \rho_2^3  \right)^{1/3} \left( \int \rho_3^3  \right)^{1/3} \nonumber\\
&\leq C \sum_{j=1}^N \norm{\Delta \varphi_j^1}^2 \left( \sum_{j=1}^N \norm{\nabla^3 \varphi_j^2}^2 \right)^{1/3} \left( \sum_{j=1}^N \norm{\nabla^3 \varphi_j^3}^2 \right)^{1/3}.
\end{align}
Using only the Lieb-Thirring inequality \eqref{Lieb_Thirring} for $a=3$, we find analogously
\begin{align}\label{rho123}
\norm[1]{\rho_1\rho_2\rho_3} \leq C \left( \sum_{j=1}^N \norm{\nabla^3 \varphi_j^1}^2 \right)^{1/3} \left( \sum_{j=1}^N \norm{\nabla^3 \varphi_j^2}^2 \right)^{1/3} \left( \sum_{j=1}^N \norm{\nabla^3 \varphi_j^3}^2 \right)^{1/3}.
\end{align}
We also need to control gradients of the phase factor $\Phi^t(x) = N^{-2/3} \int_0^t ds\, \Big( v*\sum_{j=1}^N|\varphi_j^{\free,s}|^2 \Big)(x)$. With Lemma~\ref{lem:v2_free} we find
\begin{align}\label{nablaPhi}
\norm[\infty]{\nabla \Phi^t} = N^{-2/3} \sup_x \int_0^t ds \Big( (\nabla v)*\rho^{\free,s} \Big)(x) \leq \sqrt{3}\, t N^{-2/3} \sup_{s\in [0,t]} \norm[\infty]{v^2 * \rho^{\free,s}} \leq Ct N^{-1/3}.
\end{align}
The terms $\sum_{j=1}^N \norm{\nabla \phieff_j^t}^2$ and $\sum_{j=1}^N \norm{\Delta \phieff_j^t}^2$ can easily be controlled with the assumption \eqref{ass_Delta4_lem2} and the preceding estimates. We find
\begin{align}\label{nabla_phieff}
\sum_{j=1}^N \norm{\nabla \phieff_j^t}^2 &= \sum_{j=1}^N \norm{(-i\nabla \Phi^t) e^{-i\Phi^t} \varphi_j^{\free,t} + e^{-i\Phi^t} \nabla \varphi_j^{\free,t}}^2 \nonumber\\
&\leq 2 N \norm[\infty]{\nabla \Phi^t} + 2 \sum_{j=1}^N \norm{\nabla \varphi_j^{\free,t}}^2 \nonumber\\
&\leq C (1+t) N.
\end{align}
Note that due to $\Delta_x |x-y|^{-1} = -4\pi\delta(x-y)$ we have
\begin{equation}\label{DeltaPhi}
\Delta_x \Phi^t(x) = -4\pi N^{-2/3} \int_0^t ds \, \rho^{\free,s}(x).
\end{equation}
Using the assumption \eqref{ass_Delta4_lem2}, and \eqref{rho123}, \eqref{nablaPhi} and \eqref{DeltaPhi}, it follows that
\begin{align}\label{Laplace_phieff}
\sum_{j=1}^N \norm{\Delta \phieff_j^t}^2 &= \sum_{j=1}^N \norm{-i(\Delta \Phi^t) e^{-i\Phi^t} \varphi_j^{\free,t} - (\nabla \Phi^t)^2 e^{-i\Phi^t} \varphi_j^{\free,t} -2i (\nabla \Phi^t) e^{-i\Phi^t} \nabla\varphi_j^{\free,t} + e^{-i\Phi^t} \Delta \varphi_j^{\free,t}}^2 \nonumber\\
&\leq 4 \left( \int \rho^{\free,t}(\Delta \Phi^t)^2 + N \norm[\infty]{\nabla \Phi^t}^4 + 2\norm[\infty]{\nabla \Phi^t}^2 \sum_{j=1}^N \norm{\nabla \varphi_j^{\free,t}}^2 + \sum_{j=1}^N \norm{\Delta \varphi_j^{\free,t}}^2 \right) \nonumber\\
&\leq C \left( N^{-4/3} N t^2 + t^4 N^{-4/3} N + t^2 N^{-2/3} N + N \right) \nonumber\\
&\leq C(1+t^4)N.
\end{align}
Finally, we need to control the kinetic energy of the solution to the Hartree equation \eqref{Hartree_main}, which can be done using energy conservation, i.e., $E^t=E^0$, where
\begin{equation}
E^t := \sum_{j=1}^N \norm{\nabla \varphi_j^t}^2 + \frac{1}{2} N^{-2/3} \int (v*\rho^t)\rho^t.
\end{equation}
By the Hardy-Littlewood-Sobolev, H\"older and Lieb-Thirring inequalities, we have
\begin{align}\label{HLS_v}
\int (v*\rho)\rho \leq C \norm[6/5]{\rho}^2 \leq C \left( \int \rho^{5/3} \right)^{1/2} \left( \int \rho \right)^{7/6} \leq C \left( \sum_{j=1}^N \norm{\nabla \varphi_j}^2 \right)^{1/2} N^{7/6},
\end{align}
and thus
\begin{align}
\sum_{j=1}^N \norm{\nabla \varphi_j^t}^2 &= E^t - \frac{1}{2} N^{-2/3} \int (v*\rho^t)\rho^t \nonumber\\
&= E^0 - \frac{1}{2} N^{-2/3} \int (v*\rho^t)\rho^t \nonumber\\
&= \sum_{j=1}^N \norm{\nabla \varphi_j^0}^2 + \frac{1}{2} N^{-2/3} \int (v*\rho^0)\rho^0 - \frac{1}{2} N^{-2/3} \int (v*\rho^t)\rho^t \nonumber\\
&\leq \sum_{j=1}^N \norm{\nabla \varphi_j^0}^2 + \frac{C}{2} N^{-2/3}\left( \sum_{j=1}^N \norm{\nabla \varphi_j^0}^2 \right)^{1/2} N^{7/6} + \frac{C}{2} N^{-2/3}\left( \sum_{j=1}^N \norm{\nabla \varphi_j^t}^2 \right)^{1/2} N^{7/6} \nonumber\\
&\leq \frac{3}{2} \sum_{j=1}^N \norm{\nabla \varphi_j^0}^2 + \frac{1}{2} \sum_{j=1}^N \norm{\nabla \varphi_j^t}^2 + CN,
\end{align}
i.e., 
\begin{align}\label{conservation_hartree_kin}
\sum_{j=1}^N \norm{\nabla \varphi_j^t}^2 &\leq 3 \sum_{j=1}^N \norm{\nabla \varphi_j^0}^2 + CN.
\end{align}
Similarly to \eqref{HLS_v}, by the Hardy-Littlewood-Sobolev, H\"older and Lieb-Thirring inequalities, we have
\begin{align}\label{HLS_v2}
\int (v^2*\rho)\rho \leq C \norm[3/2]{\rho}^2 \leq C \left( \int \rho^{5/3} \right) \left( \int \rho \right)^{1/3} \leq C \left( \sum_{j=1}^N \norm{\nabla \varphi_j}^2 \right) N^{1/3}.
\end{align}

\noindent \textbf{Step 2:} We now control $\sum_{j=1}^N \norm{\varphi_j^t-\phieff_j^t}^2$ with a Gronwall estimate. We find
\begin{align}
\partial_t \sum_{j=1}^N \norm{\varphi_j^t-\phieff_j^t}^2 =& 2N^{-2/3}\Im \sum_{j=1}^N \scp{(\varphi_j^t - \phieff_j^t)}{\Big[ \big(v*\rho^t\big)\varphi_j^t - \big(v*\rho^{\free,t}\big)\phieff_j^t \Big]} \nonumber\\
& + 2\Im \sum_{j=1}^N \scp{(\varphi_j^t - \phieff_j^t)}{\Big[ i (\Delta\Phi^t)\phieff_j^t + 2i (\nabla\Phi^t)\nabla\phieff_j^t - (\nabla\Phi^t)^2\phieff_j^t \Big]}.
\end{align}
The first term can be controlled due to cancellations of the mean-field terms. Using first that we take only the imaginary part, and then Cauchy-Schwarz and Lemma~\ref{lem:v2_free}, we find
\begin{align}
& 2N^{-2/3}\Im \sum_{j=1}^N \scp{(\varphi_j^t - \phieff_j^t)}{\Big[ \big(v*\rho^t\big)\varphi_j^t - \big(v*\rho^{\free,t}\big)\phieff_j^t \Big]} \nonumber\\
&\qquad= 2N^{-2/3}\Im \sum_{j=1}^N \scp{(\varphi_j^t - \phieff_j^t)}{\Big(v*\big(\rho^t-\rho^{\free,t}\big)\Big)\phieff_j^t} \nonumber\\
&\qquad= 2N^{-2/3}\Im \sum_{j,k=1}^N \scp{(\varphi_j^t - \phieff_j^t)}{\Big(v*\big((\overline{\varphi_k^t-\phieff_k^t})\varphi_k\big)\Big)\phieff_j^t} \nonumber\\
&\qquad\leq 2N^{-2/3} \sum_{j=1}^N \norm{\varphi_j^t-\phieff_j^t}^2 \sqrt{N \norm[\infty]{v^2 * \rho^{\free,t}}} \nonumber\\
&\qquad\leq C \sum_{j=1}^N \norm{\varphi_j^t-\phieff_j^t}^2.
\end{align}
The second term can be controlled due to the smallness of gradients of $\Phi^t$. Using Cauchy-Schwarz and then \eqref{rho123}, \eqref{nablaPhi}, \eqref{nabla_phieff} and \eqref{DeltaPhi}, we find
\begin{align}
& 2\Im \sum_{j=1}^N \scp{(\varphi_j^t - \phieff_j^t)}{\Big[ i (\Delta\Phi^t)\phieff_j^t + 2i (\nabla\Phi^t)\nabla\phieff_j^t - (\nabla\Phi^t)^2\phieff_j^t \Big]} \nonumber\\
&\qquad\leq 2\sqrt{\sum_{j=1}^N \norm{\varphi_j^t-\phieff_j^t}^2} \sqrt{ 3\sum_{j=1}^N \bigg( \norm{(\Delta\Phi^t)\phieff_j^t}^2 + 2 \norm{(\nabla\Phi^t)\nabla\phieff_j^t}^2 + \norm{(\nabla\Phi^t)^2 \phieff_j^t}^2 \bigg)} \nonumber\\
&\qquad\leq \sum_{j=1}^N \norm{\varphi_j^t-\phieff_j^t}^2 + C t^2 N^{-4/3} N + Ct^2N^{-2/3} (1+t) N + Ct^4N^{-4/3} N \nonumber\\
&\qquad\leq \sum_{j=1}^N \norm{\varphi_j^t-\phieff_j^t}^2 + C (1+t^4) N^{1/3}.
\end{align}
In total, we have estimated
\begin{align}\label{dtphi_phieff}
\partial_t \sum_{j=1}^N \norm{\varphi_j^t-\phieff_j^t}^2 \leq C \sum_{j=1}^N \norm{\varphi_j^t-\phieff_j^t}^2 + C(1+t^4)N^{1/3}.
\end{align}
Gronwall's inequality states that if for a function $f:\RRR\to\RRR$ the estimate
\begin{equation}
\partial_t f(t) \leq C(t) f(t) + \varepsilon(t)
\end{equation}
holds for some $C:\RRR\to\RRR^+$ and some $\varepsilon:\RRR\to\RRR^+$, then
\begin{equation}\label{gronwall}
f(t) \leq e^{\int_0^tC(s)ds} f(0) + \int_0^t \varepsilon(s) e^{\int_s^tC(s')ds'}ds.
\end{equation}
Thus, since $\norm{\varphi_j^0-\phieff_j^0} = 0$, the estimate \eqref{dtphi_phieff} implies that
\begin{align}\label{estimate_phi_phieff}
\sum_{j=1}^N \norm{\varphi_j^t-\phieff_j^t}^2 \leq C N^{1/3} e^{Ct} \int_0^t (1+s^4) e^{-Cs} ds \leq C N^{1/3} e^{Ct}.
\end{align}

\noindent \textbf{Step 3:} We can now control $\sum_{j=1}^N \norm{\nabla(\varphi_j^t-\phieff_j^t)}^2$ with a Gronwall estimate. We find
\begin{align}\label{dt_nabla_phi_phieff}
\partial_t \sum_{j=1}^N \norm{\nabla (\varphi_j^t-\phieff_j^t)}^2 =& 2N^{-2/3}\Im \sum_{j=1}^N \scp{\nabla (\varphi_j^t-\phieff_j^t)}{\nabla \Big[ \big(v*\rho^t\big)\varphi_j^t - \big(v*\rho^{\free,t}\big)\phieff_j^t \Big]} \nonumber\\
& + 2\Im \sum_{j=1}^N \scp{\nabla (\varphi_j^t-\phieff_j^t)}{\nabla \Big[ i (\Delta\Phi^t)\phieff_j^t + 2i (\nabla\Phi^t)\nabla\phieff_j^t - (\nabla\Phi^t)^2\phieff_j^t \Big]}.
\end{align}
Similarly to before, the first term can be controlled by using cancellations of the mean-field terms. Since we only take the imaginary part, we can write
\begin{align}\label{dt_nabla_first_term}
& 2N^{-2/3}\Im \sum_{j=1}^N \scp{\nabla (\varphi_j^t-\phieff_j^t)}{\nabla \Big[ \big(v*\rho^t\big)\varphi_j^t - \big(v*\rho^{\free,t}\big)\phieff_j^t \Big]} \nonumber\\
&\qquad= 2N^{-2/3}\Im \sum_{j=1}^N \scp{\nabla (\varphi_j^t-\phieff_j^t)}{\Big[ \big((\nabla v)*\rho^t\big)\varphi_j^t - \big((\nabla v)*\rho^{\free,t}\big)\phieff_j^t \Big]} \nonumber\\
&\qquad\quad + 2N^{-2/3}\Im \sum_{j=1}^N \scp{\nabla (\varphi_j^t-\phieff_j^t)}{\Big[ \big(v*\rho^t\big)\nabla\phieff_j^t - \big(v*\rho^{\free,t}\big)\nabla\phieff_j^t \Big]} \nonumber\\
&\qquad= 2N^{-2/3}\Im \sum_{j=1}^N \scp{\nabla (\varphi_j^t-\phieff_j^t)}{\Big[ \big((\nabla v)*\rho^t\big)\varphi_j^t - \big((\nabla v)*\rho^{\free,t}\big)\phieff_j^t \Big]} \nonumber\\
&\qquad\quad + 2N^{-2/3}\Im \sum_{j,k=1}^N \scp{\nabla (\varphi_j^t-\phieff_j^t)}{\bigg(v*\Big( |\varphi_k^t-\phieff_k^t|^2 + (\overline{\varphi_k^t-\phieff_k^t})\phieff_k^t + \overline{\phieff^t_k}(\varphi_k^t-\phieff_k^t) \Big) \bigg)\nabla\phieff_j^t}.
\end{align}
By Cauchy-Schwarz, \eqref{v2rho_splitting}, Lemma~\ref{lem:v2_free}, and \eqref{HLS_v2} together with \eqref{conservation_hartree_kin} and the assumption \eqref{ass_Delta4_lem2}, we find
\begin{align}
& 2N^{-2/3}\Im \sum_{j=1}^N \scp{\nabla (\varphi_j^t-\phieff_j^t)}{\big((\nabla v)*\rho^t\big)\varphi_j^t} \nonumber\\
&\qquad\leq 2N^{-2/3} \sqrt{\sum_{j=1}^N \norm{\nabla (\varphi_j^t-\phieff_j^t)}^2 \norm[\infty]{v^2*\rho^t}} \sqrt{\int (v^2*\rho^t)\rho^t} \nonumber\\
&\qquad\leq C N^{-2/3} \sqrt{\sum_{j=1}^N \norm{\nabla (\varphi_j^t-\phieff_j^t)}^2} \sqrt{\sum_{j=1}^N \norm{\nabla (\varphi_j^t-\phieff_j^t)}^2 + \norm[\infty]{v^2*\rho^{\free,t}}} \sqrt{N^{4/3}} \nonumber\\
&\qquad\leq C \sum_{j=1}^N \norm{\nabla (\varphi_j^t-\phieff_j^t)}^2 + CN^{1/3}.
\end{align}
In a similar way, by using only Lemma~\ref{lem:v2_free} after applying Cauchy-Schwarz, we find
\begin{align}
2N^{-2/3}\Im \sum_{j=1}^N \scp{\nabla (\varphi_j^t-\phieff_j^t)}{\big((\nabla v)*\rho^{\free,t}\big)\phieff_j^t} \leq C \sum_{j=1}^N \norm{\nabla (\varphi_j^t-\phieff_j^t)}^2 + CN^{1/3}.
\end{align}
Furthermore, using Cauchy-Schwarz and Hardy's inequality in the first step, and then Lemma~\ref{lem:v2_free}, \eqref{nabla_phieff}, \eqref{Laplace_phieff} and the result \eqref{estimate_phi_phieff} from step 2, we find
\begin{align}
& 2N^{-2/3}\Im \sum_{j,k=1}^N \scp{\nabla (\varphi_j^t-\phieff_j^t)}{\bigg(v*\Big( |\varphi_k^t-\phieff_k^t|^2 + (\overline{\varphi_k^t-\phieff_k^t})\phieff_k^t + \overline{\phieff^t_k}(\varphi_k^t-\phieff_k^t) \Big) \bigg)\nabla\phieff_j^t} \nonumber\\
&\qquad\leq 2N^{-2/3} \sqrt{\sum_{j=1}^N \norm{\nabla (\varphi_j^t-\phieff_j^t)}^2 \sum_{k=1}^N \norm{\varphi_k^t-\phieff_k^t}^2} \sqrt{\sum_{j=1}^N \norm{\Delta \phieff_j^t}^2\sum_{k=1}^N \norm{\varphi_k^t-\phieff_k^t}^2} \nonumber\\
&\qquad\quad + 2N^{-2/3} \sqrt{\sum_{j=1}^N \norm{\nabla (\varphi_j^t-\phieff_j^t)}^2 \sum_{k=1}^N \norm{\varphi_k^t-\phieff_k^t}^2} \sqrt{\sum_{j=1}^N \norm{\nabla \phieff_j^t}^2 \norm[\infty]{v^2*\rho^{\free,t}}} \nonumber\\
&\qquad\quad + 2N^{-2/3} \sqrt{\sum_{j=1}^N \norm{\nabla (\varphi_j^t-\phieff_j^t)}^2 \norm[\infty]{v^2*\rho^{\free,t}}} \sqrt{\sum_{j=1}^N \norm{\nabla \phieff_j^t}^2\sum_{k=1}^N \norm{\varphi_k^t-\phieff_k^t}^2} \nonumber\\
&\qquad\leq C \sum_{j=1}^N \norm{\nabla (\varphi_j^t-\phieff_j^t)}^2 + C N^{-4/3} \left( N^{1/3} e^{Ct} \right)^2 (1+t^4)N  + C N^{-4/3} N^{1/3} e^{Ct} (1+t)N N^{1/3} \nonumber\\
&\qquad\quad + C N^{-4/3} N^{1/3} (1+t)N N^{1/3} e^{Ct} \nonumber\\
&\qquad\leq C \sum_{j=1}^N \norm{\nabla (\varphi_j^t-\phieff_j^t)}^2 + C N^{1/3} (1+t^4) e^{Ct}.
\end{align}
This completes the estimate for the first term on the right-hand side of \eqref{dt_nabla_phi_phieff}. For the second term on the right-hand side of \eqref{dt_nabla_phi_phieff}, we find, using Cauchy-Schwarz,
\begin{align}
& 2\Im \sum_{j=1}^N \scp{\nabla (\varphi_j^t-\phieff_j^t)}{\nabla \Big[ i (\Delta\Phi^t)\phieff_j^t + 2i (\nabla\Phi^t)\nabla\phieff_j^t - (\nabla\Phi^t)^2\phieff_j^t \Big]} \nonumber\\
&\qquad= 2\Im \sum_{j=1}^N \scp{\nabla (\varphi_j^t-\phieff_j^t)}{\bigg[ i (\nabla\Delta\Phi^t)\phieff_j^t + 3i (\Delta\Phi^t)\nabla\phieff_j^t + 2i (\nabla\Phi^t)\Delta\phieff_j^t - 2(\Delta\Phi^t)(\nabla\Phi^t)\phieff_j^t - (\nabla\Phi^t)^2\nabla\phieff_j^t \bigg]} \nonumber\\
&\qquad\leq \sum_{j=1}^N \norm{\nabla (\varphi_j^t-\phieff_j^t)}^2 + C \int(\nabla\Delta\Phi^t)^2 \rho^{\free,t} + C \int (\Delta\Phi^t)^2 \sum_{j=1}^N|\nabla\phieff_j^t|^2 + C \norm[\infty]{\nabla\Phi^t}^2 \sum_{j=1}^N \norm{\Delta \phieff_j}^2 \nonumber\\
&\qquad\quad + C \norm[\infty]{\nabla\Phi^t}^2 \int (\Delta\Phi^t)^2 \rho^{\free,t} + C \norm[\infty]{\nabla\Phi^t}^4 \sum_{j=1}^N \norm{\nabla \phieff_j}^2.
\end{align}
The remaining terms can be estimated using the inequalities from step 1. Using \eqref{DeltaPhi}, and then \eqref{nrho_nrho_rho} with \eqref{nrho_rho_rho} and the assumption \eqref{ass_Delta4_lem2}, we find
\begin{align}
\int(\nabla\Delta\Phi^t)^2 \rho^{\free,t} \leq Ct^2 N^{-4/3}N^{5/3} = Ct^2 N^{1/3}.
\end{align}
Using \eqref{DeltaPhi}, and then \eqref{nrho_rho_rho} with the assumption \eqref{ass_Delta4_lem2} and with \eqref{Laplace_phieff}, we find
\begin{align}
\int (\Delta\Phi^t)^2 \sum_{j=1}^N|\nabla\phieff_j^t|^2 \leq C t^2 N^{-4/3} N^{1/3} N^{1/3} (1+t^4)N \leq C (1+t^6) N^{1/3}.
\end{align}
By \eqref{nablaPhi} and \eqref{Laplace_phieff}, we have
\begin{align}
\norm[\infty]{\nabla\Phi^t}^2 \sum_{j=1}^N \norm{\Delta \phieff_j}^2 \leq C \left(tN^{-1/3}\right)^2 (1+t^4) N \leq C (1+t^6) N^{1/3}.
\end{align}
By \eqref{nablaPhi} and \eqref{rho123} with the assumption \eqref{ass_Delta4_lem2}, we have
\begin{align}\label{Delta_Phi_2_rho_free}
\norm[\infty]{\nabla\Phi^t}^2 \int (\Delta\Phi^t)^2 \rho^{\free,t} \leq C \left( tN^{-1/3} \right)^2 N^{-4/3}t^2N = Ct^4 N^{-1}.
\end{align}
Finally, using \eqref{nablaPhi} and \eqref{nabla_phieff}, we find
\begin{align}
\norm[\infty]{\nabla\Phi^t}^4 \sum_{j=1}^N \norm{\nabla \phieff_j}^2 \leq C \left( tN^{-1/3} \right)^4 (1+t)N \leq C (1+t^5) N^{-1/3}.
\end{align}
Collecting all of our estimates, we have found
\begin{align}
\partial_t \sum_{j=1}^N \norm{\nabla (\varphi_j^t-\phieff_j^t)}^2 \leq C \sum_{j=1}^N \norm{\nabla (\varphi_j^t-\phieff_j^t)}^2 + C N^{1/3} (1+t^6)e^{Ct}.
\end{align}
Since $\norm{\nabla (\varphi_j^0-\phieff_j^0)} = 0$, Gronwall's inequality \eqref{gronwall} gives
\begin{align}
\sum_{j=1}^N \norm{\nabla (\varphi_j^t-\phieff_j^t)}^2 \leq C N^{1/3} e^{Ct}\int_0^t (1+s^6)e^{Cs} e^{-Cs} ds \leq C N^{1/3} e^{Ct}.
\end{align}
Looking back at \eqref{v2rho_splitting}, we have proved the lemma.
\end{proof}

\subsection{Proof of Main Theorems}

Before we come to the proofs of the main theorems, let us briefly summarize the main elements of Pickl's $\alpha$-method \cite{pickl:2011method} for fermions in the most simple form, see \cite{petrat:2015,bach:2015} for more details. First, we define the projectors
\begin{equation}
p := \sum_{j=1}^N\ketbr{\varphi_j} = N\gamma_{\wedge \varphi_j}, \qquad \text{and}\qquad q = 1-p.
\end{equation}
For a one-particle operator $A:L^2(\RRR^3)\to L^2(\RRR^3)$, we use the notation $A_1$ to indicate that the operator acts only on the first variable, i.e., we define 
\begin{equation}
A_1:L^2(\RRR^{3N})\to L^2(\RRR^{3N}), A_1 = A\otimes 1 \otimes \ldots\otimes 1.
\end{equation}
Instead of controlling the difference of the reduced density matrices in trace norm directly, it is technically better to control the functional
\begin{equation}
\alpha(\psi,\{ \varphi_j \}_{j=1,\ldots,N}) := \scp{\psi}{q_1 \psi},
\end{equation}
which was already introduced (in a static setting) in \cite{bach:1993,graf_solovej:1994}. This functional is related to the difference in reduced density matrices by
\begin{align}\label{alpha_gamma}
\norm[\tr]{\gamma_{\psi} - \gamma_{\wedge \varphi_j}}^2 \leq 8 \alpha(\psi,\{ \varphi_j \}_{j=1,\ldots,N}) \leq 4 \norm[\tr]{\gamma_{\psi} - \gamma_{\wedge \varphi_j}},
\end{align}
see \cite[Lemma~3.2]{petrat:2015}.

\begin{proof}[Proof of Theorem~\ref{thm:main_result1}]
In the proof we would like to control $\alphaeff(t) := \alpha(\psi^t,\{ \phieff_j^t \}_{j=1,\ldots,N})$, where $\psi^t$ is solution to the Schr\"odinger equation \eqref{Schr_main}, and $\phieff_j^t$ are the solutions to \eqref{free_with_phase_main}, i.e., 
\begin{equation}
i\partial_t \phieff_j^t = \Big(-i\nabla + (\nabla \Phi^t)\Big)^2 \phieff_j^t + N^{-2/3} \big( v * \rho^{\free,t}\big) \phieff_j^t =: h^{\eff,t}\phieff_j^t.
\end{equation}
Accordingly, we define $\peff^t = \sum_{j=1}^N\ketbr{\phieff_j^t}$ and $\qeff^t = 1 - \peff^t$. Using the antisymmetry of $\psi^t$, we find for the time derivative of $\alphaeff(t)$,
\begin{align}
\partial_t \alphaeff(t) &= i \scp{\psi^t}{\Big[ H - h_1^{\eff,t} , \qeff_1^{t} \Big] \psi^t} \nonumber \\
&= \underbrace{i N^{-2/3}\scp{\psi^t}{\bigg[ \frac{N-1}{2} v_{12} - \Big( v*\rho^{\free,t} \Big)_1 , \qeff_1^{t} \bigg] \psi^t}}_{:= R_1(t)} \nonumber \\
&\qquad + \underbrace{i \scp{\psi^t}{\Big[- \Delta_1 - \big( -i\nabla_1 + (\nabla\Phi^t)_1 \big)^2 , \qeff_1^{t} \Big] \psi^t}}_{:= R_2(t)}.
\end{align}

\noindent\textbf{Control of $R_1(t)$.} Since $\rho^{\free,t} = \rhoeff^t$, we see that the term $R_1(t)$ has already been controlled in \cite[Lemma~7.4 for $\gamma=1$]{petrat:2015}. There, the bound
\begin{equation}\label{bound_R_1_1}
R_1(t) \leq C N^{-1/6} \sqrt{ \norm[\infty]{v^2 * \rhoeff^t}} \, \Big( \alphaeff(t) + N^{-1} \Big)
\end{equation}
was established. Since now again $\rhoeff^t = \rho^{\free,t}$, we can use Lemma~\ref{lem:v2_free} to conclude
\begin{equation}\label{bound_R_1_2}
R_1(t) \leq C \Big( \alphaeff(t) + N^{-1} \Big).
\end{equation}

\noindent\textbf{Control of $R_2(t)$.} In order to control $R_2(t)$ we use the smallness of gradients of the phase factor $\Phi^t$. Note that due to the antisymmetry of $\psi^t$, we have $\norm{A_1\psi^t}^2 \leq N^{-1} \tr(A^*A)$, see, e.g., \cite[Lemma~3.12]{bach:2015}. We find
\begin{align}\label{bound_R_2}
R_2(t) &= 2\Im \scp{\psi^t}{\qeff_1^{t}\Big(- \Delta_1 - \big( -i\nabla_1 + (\nabla\Phi^t)_1 \big)^2 \Big) \peff_1^{t} \psi^t} \nonumber \\
&= 2\Im \scp{\psi^t}{\qeff_1^{t}\bigg( i(\Delta\Phi^t)_1 + 2 i (\nabla\Phi^t)_1 \nabla_1 - (\nabla\Phi^t)_1^2 \bigg) \peff_1^{t} \psi^t} \nonumber \\
&\leq 2\norm{\qeff_1^{t}\psi^t} \left( \sqrt{\scp{\psi^t}{\peff_1^{t} (\Delta\Phi^t)_1^2 \peff_1^{t} \psi^t}} + \norm[\infty]{\nabla\Phi^t} \sqrt{\scp{\psi^t}{\peff_1^{t}(-\Delta_1)\peff_1^{t}\psi^t}} + \norm[\infty]{\nabla\Phi^t}^2 \sqrt{\scp{\psi^t}{\peff_1^{t}\psi^t}} \right) \nonumber\\
&\leq \alphaeff(t) + C N^{-1} \int (\Delta\Phi^t)^2\rho^{\free,t} + C \norm[\infty]{\nabla\Phi^t}^2 N^{-1} \sum_{j=1}^N \norm{\nabla\phieff_j^t}^2 + C \norm[\infty]{\nabla\Phi^t}^4.
\end{align}
In order to bound these terms, we refer back to the proof of Lemma~\ref{lem:v2_hartree}. In the same way as in the estimate \eqref{Delta_Phi_2_rho_free}, we find that
\begin{equation}
N^{-1} \int (\Delta\Phi^t)^2\rho^{\free,t} \leq N^{-1} N^{-4/3} Ct^2 N = Ct^2 N^{-4/3}.
\end{equation}
From \eqref{nablaPhi} and \eqref{nabla_phieff}, we find
\begin{equation}
\norm[\infty]{\nabla\Phi^t}^2 N^{-1} \sum_{j=1}^N \norm{\nabla\phieff_j^t}^2 \leq C \left( tN^{-1/3} \right)^2 N^{-1} (1+t) N \leq C (1+t^3) N^{-2/3}.
\end{equation}
Finally, \eqref{nablaPhi} gives
\begin{equation}
\norm[\infty]{\nabla\Phi^t}^4 \leq Ct^4 N^{-4/3}.
\end{equation}
Thus,
\begin{equation}
R_2(t) \leq \alphaeff(t) + (1+t^4)N^{-2/3}.
\end{equation}

\noindent\textbf{Gronwall Estimate.} In total, we have found
\begin{equation}
\partial_t \alphaeff(t) \leq C \alphaeff(t) + C (1+t^4)N^{-2/3}.
\end{equation}
From the Gronwall estimate \eqref{gronwall}, we conclude that
\begin{equation}
\alphaeff(t) \leq e^{Ct} \alphaeff(0) + C e^{Ct} N^{-2/3}.
\end{equation}
From \eqref{alpha_gamma}, the desired estimate \eqref{main_result1} follows.
\end{proof}

\begin{proof}[Proof of Theorem~\ref{thm:main_result2}]
We would like to control $\alpha(t) := \alpha(\psi^t,\{ \varphi_j^t \}_{j=1,\ldots,N})$, where $\psi^t$ is the solution to the Schr\"odinger equation \eqref{Schr_main}, and $\varphi_j^t$ are the solutions to the fermionic Hartree equations \eqref{Hartree_main}. We can now directly use \cite[Lemma~7.4 for $\gamma=1$]{petrat:2015}, i.e.,
\begin{equation}
\partial_t \alpha(t) \leq C N^{-1/6} \sqrt{ \norm[\infty]{v^2 * \rho^t}} \, \Big( \alpha(t) + N^{-1} \Big).
\end{equation}
With Lemma~\ref{lem:v2_hartree}, we then have
\begin{equation}
\partial_t \alpha(t) \leq C e^{Ct} \Big( \alpha(t) + N^{-1} \Big).
\end{equation}
Applying again the Gronwall estimate \eqref{gronwall} and \eqref{alpha_gamma} gives the desired result \eqref{main_result2}.
\end{proof}

\absatz

\noindent{\it Acknowledgments.} I would like to thank Maximilian Jeblick, David Mitrouskas, Peter Pickl, and Robert Seiringer for many helpful discussions, and I am particularly grateful to the referees for valuable suggestions and comments. The research leading to this work has received funding from the People Programme (Marie Curie Actions) of the European Union's Seventh Framework Programme (FP7/2007-2013) under REA grant agreement n\textdegree~291734. My research is supported by a Postdoc scholarship of the German Academic Exchange Service (DAAD).

\bibliographystyle{plain}
\bibliography{references}

\end{document}